\newcommand{\mcF}{\mathcal{F}}
\newcommand{\ot}{\otimes}
\newcommand{\by}{\! \times \!}
\newcommand{\ra}{\rightarrow}
\newcommand{\GL}{\operatorname{GL}}
\newcommand{\rank}{\operatorname{rank}}
\newcommand{\id}{\operatorname{id}}
\newcommand{\ceil}[1]{\ensuremath{\lceil #1 \rceil}}
\newcommand{\Pol}{\operatorname{Pol}}
\newcommand{\Inv}{\operatorname{Inv}}
\newcommand{\CNOT}{\operatorname{CNOT}}
\newcommand{\SWAP}{\operatorname{SWAP}}
\def\1#1{{\bf #1}}
\def\2#1{{\cal #1}}
\def\3#1{{\sl #1}}
\def\4#1{{\tt #1}}
\def\5#1{{\sf #1}}
\def\6#1{{\mathfrak #1}}
\def\7#1{{\mathbb #1}}
\def\AND{{\sf AND}}
\def\OR{{\sf OR}}
\def\SAT{{\sf 3-SAT}}
\newcommand{\be}{\begin{equation}}
\newcommand{\ee}{\end{equation}}
\newcommand{\x}{\mathbf{x}}
\newcommand{\y}{\mathbf{y}}
\newcommand{\bra}[1]{\mbox{$\langle #1|$}}
\newcommand{\ket}[1]{\mbox{$|#1\rangle$}}
\newcommand{\braket}[2]{\mbox{$\langle #1,#2\rangle$}}
\theoremstyle{plain}
\newtheorem{theorem}{Theorem}
\newtheorem{corollary}[theorem]{Corollary}
\newtheorem{lemma}[theorem]{Lemma}
\newtheorem{observation}[theorem]{Observation}
\newtheorem{question}{Question} 
\theoremstyle{definition}
\newtheorem{remark}[theorem]{Remark}
\theoremstyle{definition}
\newtheorem{defn}[theorem]{Definition}
\theoremstyle{definition}
\theoremstyle{definition}
\newtheorem{example}[theorem]{Example}
\theoremstyle{definition}
\definecolor{dred}{rgb}{.8,0.2,.2}
\definecolor{ddred}{rgb}{.8,0.5,.5}
\definecolor{dblue}{rgb}{.2,0.2,.8}
\begin{document}
\title{Undecidability in Tensor Network States}

\author{Jason Morton}\email{morton@math.psu.edu}\affiliation{Department of Mathematics, Pennsylvania State University, University Park PA 16802}
\author{Jacob Biamonte}\email{jacob.biamonte@qubit.org}\affiliation{Institute for Scientific Interchange, ISI Foundation,
Via Alassio 11/c, 10126 Torino, Italy}\affiliation{Centre for Quantum Technologies, National University of Singapore, Block S15, 3 Science Drive 2, Singapore 117543}

\begin{abstract} 
Recent work has examined how undecidable problems can arise in quantum information science.  We augment this by introducing three new undecidable problems stated in terms of tensor networks.  These relate to ideas of Penrose about the physicality of a spin-network representing a physical process, closed timelike curves, and Boolean relation theory.  Seemingly slight modifications of the constraints on the topology or the tensor families generating the networks leads to problems that transition  from decidable, to undecidable to even always satisfiable.  
\end{abstract}

\maketitle

As the limitations of computers are ultimately governed by the laws of physics, and as physical process can in turn be viewed as computations, it is becoming increasingly important to understand how to bridge computer science and physics.  In this setting, quantum mechanics plays a key role and could be thought of as a generalization of classical computation.  Most work has focused on developing quantum complexity theory and algorithms governed by quantum theory.  Building on the success of this endeavor, increasingly subtle ideas in computer science are finding their way into quantum physics.  An emerging theme in this regard is decidability in quantum information science \cite{eisert2011quantum,blondel2005decidable,wolf2011problems}.

Our starting place is the recent work \cite{eisert2011quantum, wolf2011problems}, wherein some undecidable problems in quantum information theory were discovered.  We augment this work with an emphasis on finding additional problems naturally phrased in terms of tensor network states. 
We address three questions.
\begin{question} \label{ques1}
Is every tensor network built from a library $\mcF$ of $n$ at most $d$-qubit quantum gates, chained together and allowing one postselection per gate, physical?
\end{question}
For large enough $n$ and $d$, 
this question is undecidable.  
Theorems \ref{thm:hiddengrandfather} and \ref{thm:rankbound} offer alternative formulations of this result, while Theorem \ref{thm:QMMP} directly addresses Question \ref{ques1}.  
Suppose one writes down a tensor network using some fixed library of primitive operations, each constructed from some combination of unitary gates, state preparations and postselected measurements.  It may be that the resulting network is physically impossible.  That is, although each individual operation in the network would be allowed in certain circumstances, their combination would be impossible.  

One way to prove this is by considering bell states and postselected bell costates.  We can then form postselected closed timelike curves (P-CTC) \cite{lloyd2011closed}.  Thus we can obtain a zero tensor by inadvertently creating a grandfather paradox (see the elementary example of a grandfather-type paradox in Figure \ref{fig:paradoxes}).  We show that, determining whether or not we can do so is undecidable.

\begin{theorem}\label{thm:hiddengrandfather}
Determining if we can construct a tensor network which contains a hidden-grandfather-paradox type contradiction is undecidable. 
\end{theorem}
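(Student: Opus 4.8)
The plan is to reduce from the Matrix Mortality Problem (MMP): given a finite set of integer matrices $\{M_1,\dots,M_k\}\subset \ZZ^{d\times d}$, decide whether some finite product $M_{i_1}M_{i_2}\cdots M_{i_m}$ equals the zero matrix. This problem is undecidable even for a fixed, small number of matrices of small fixed dimension, which is exactly what will let us bound the library size $n$ and the local dimension $d$ as anticipated in Question \ref{ques1}. The core idea is to realize the assertion ``some product of the $M_i$ vanishes'' as ``some network built from the library $\mcF$ evaluates to the zero tensor,'' and then to exhibit that zero tensor as a hidden grandfather paradox by closing the network with a Bell state and a postselected Bell costate, exactly in the manner of Figure \ref{fig:paradoxes}.

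First I would build the library $\mcF$. For each generator $M_i$ I would introduce a gate whose action on a distinguished ``data'' line is the linear map $M_i$, implemented as a single postselected branch of a unitary. After rescaling all $M_i$ by a common positive constant so that $\|M_i\|\le 1$, and padding the dimension up to the nearest power of two with zero blocks, each such contraction becomes a legitimate postselected qubit operation; neither the rescaling nor the block embedding changes whether a product is zero. Chaining these gates along the data line contracts their indices, so a network applying the gates for $M_{i_1},\dots,M_{i_m}$ in sequence carries the linear map $M_{i_1}\cdots M_{i_m}$ up to a harmless global scale, and ranging over all finite chains drawn from $\mcF$ ranges over exactly the products appearing in the MMP instance.

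Next I would turn a vanishing product into a genuine grandfather paradox. Closing the data line of such a chain with a Bell state and a postselected Bell costate forms a P-CTC whose residual effective operator around the loop is arranged to be precisely the product $M_{i_1}\cdots M_{i_m}$; when this operator is the zero matrix the loop admits no consistent solution, the postselection succeeds with amplitude zero, and the network is the zero tensor. This is the same contradiction realized in miniature by the single-qubit $\NOT$-in-a-loop example of Figure \ref{fig:paradoxes}, with $\NOT$ replaced by a composite built from $\mcF$. Consequently a network from $\mcF$ contains a hidden grandfather paradox if and only if the underlying matrix set is mortal, and undecidability transfers from MMP.

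The main obstacle is the faithfulness of this dictionary in both directions, and I expect the reverse direction to be the delicate part: I must ensure that the only way a network assembled from $\mcF$ can evaluate to the zero tensor is through a genuinely mortal product, with no spurious cancellations introduced by the Bell-state closure, the postselection normalizations, or the contraction of the auxiliary (non-data) indices of the gates. Controlling this will require designing the gates so that the data line is cleanly decoupled from the ancillary structure that makes each $M_i$ a physical postselected operation, and verifying that the loop reads off exactly the matrix product and nothing else. Pinning down the minimal $n$ and $d$ for which this encoding still supports an undecidable MMP instance is the remaining bookkeeping.
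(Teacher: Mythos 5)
Your overall strategy is the paper's: both arguments reduce from the integer Matrix Mortality Problem (Paterson's eight $3\times 3$ matrices, or Cassaigne--Karhum\"aki's two $24\times 24$ matrices, which is what fixes the constants $n$ and $d$ in Theorem \ref{thm:QMMP}), and both implement each integer matrix as a postselected quantum operation so that chaining gates along the data line realizes the matrix product up to a nonzero positive scalar. The paper does the implementation via an explicit SVD $M=U\Sigma V^{\top}$ with $\Sigma$ realized by a copy tensor and a single postselection (Lemma \ref{lem:int}, yielding $\|M\|_F^{-2}M$); your rescaling-plus-block-encoding dilation is an equally standard device, and in both cases the scalars introduced are strictly positive, so neither affects whether a product vanishes. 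Up to this point the two proofs are interchangeable.

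The genuine gap is in your closure step, and it is exactly the point you flagged but did not resolve. Closing the data line of a single chain with a Bell state and a postselected Bell costate computes, up to a nonzero scalar, the \emph{trace} of the product $P=M_{i_1}\cdots M_{i_m}$, not $P$ itself. So your closed network is the zero tensor iff $\operatorname{tr}(P)=0$, and the claimed equivalence ``paradox constructible iff the matrix set is mortal'' fails in the only-if direction: nonzero traceless operators abound. Indeed the grandfather-paradox network of Figure \ref{fig:paradoxes} is itself such an example ($\operatorname{tr}(\sigma_x)=0$ while $\sigma_x\neq 0$), and already in the Paterson-style instances used here one has $\operatorname{tr}(AM)=p+q-s$, which vanishes for instance at $p=1$, $q=0$, $s=1$ even though $AM\neq 0$; your reduction would thus report a hidden paradox for immortal sets. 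The paper avoids this by testing physicality through the vanishing \emph{norm} rather than the trace: the open chain is contracted with a conjugated copy of itself, so the closed network evaluates $\operatorname{tr}(P^{\dagger}P)=\|P\|_F^{2}$, which vanishes iff $P=0$, making the zero-tensor condition exactly equivalent to mortality. The repair to your argument is the same move: wrap the Bell closure around the chain composed with its adjoint chain (equivalently, evaluate the network's norm by self-contraction, as in Section \ref{sec:PhysTN}), after which the iff is faithful and undecidability transfers as you intended.
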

 
The {\em rank} of a tensor is the least $r$ such that it can be expressed as a sum of $r$ rank-one (product) tensors.  Hastad showed that given a tensor $T$ described by a table of numbers, the associated tensor rank decision problem ``$\rank(T)\leq r$?''  is NP-hard \cite{hastad1990tensor}; in fact most tensor problems are NP-hard \cite{hillar2009most}.  Theorem \ref{thm:QMMP} implies that given a library $\mcF$ of tensors, the question of whether we can construct a tensor network state $\Psi$ on $n$ qubits such that $\rank(\Psi)\leq r$ is undecidable.  Hence, 
\begin{theorem}\label{thm:rankbound}
The constructibility of rank-bounded tensors is undecidable.
\end{theorem}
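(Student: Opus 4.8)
The plan is to derive Theorem~\ref{thm:rankbound} as an immediate corollary of Theorem~\ref{thm:QMMP}, exploiting the fact that the \emph{zero} tensor is singled out by its rank. Recall that $\rank(\Psi)\le 0$ holds if and only if $\Psi$ is the zero tensor, since rank is a nonnegative integer and only the empty sum of rank-one tensors yields $0$. I would therefore specialize the rank bound to $r=0$ and argue that the constructibility question ``can a network drawn from a library $\mcF$ be contracted to a tensor $\Psi$ with $\rank(\Psi)\le 0$?'' is literally the same question as ``can a network drawn from $\mcF$ be contracted to the zero tensor?''

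First I would make precise the identification, already sketched around Theorem~\ref{thm:hiddengrandfather}, between an \emph{unphysical} network and one that evaluates to $0$: the hidden-grandfather-paradox mechanism built from Bell states and postselected Bell costates (the P-CTC construction) produces exactly the zero tensor, and conversely a network contracting to $0$ is physically impossible. Under this dictionary, the problem shown undecidable by Theorem~\ref{thm:QMMP} --- deciding whether $\mcF$ admits a network realizing this contradiction --- is precisely the problem of deciding whether $\mcF$ can construct the zero tensor.

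Next I would assemble the reduction. Suppose, toward a contradiction, that rank-bounded constructibility were decidable. Running that decision procedure on the same library $\mcF$ with the bound $r=0$ would decide whether $\mcF$ can construct a tensor of rank $\le 0$, i.e. the zero tensor, i.e. whether an unphysical grandfather-paradox network exists --- contradicting Theorem~\ref{thm:QMMP}. Hence rank-bounded constructibility is undecidable already for the single value $r=0$, which proves the theorem; the remaining values $r>0$ follow a fortiori when $r$ is part of the input.

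The step I expect to require the most care is the first: verifying that the correspondence between unphysicality and the zero tensor is tight enough that the two decision problems have \emph{identical} yes-instances, and that the class of networks admitted in the rank problem coincides with the one used in Theorem~\ref{thm:QMMP} (same library $\mcF$, same allowed contractions and one postselection per gate), so that the reduction is faithful. Once this dictionary is pinned down the reduction is immediate, and the only genuine mathematical content beyond Theorem~\ref{thm:QMMP} is the elementary observation that rank zero characterizes the zero tensor.
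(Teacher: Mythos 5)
Your proposal is correct and follows essentially the same route as the paper, which proves Theorem~\ref{thm:rankbound} only by the one-line assertion that Theorem~\ref{thm:QMMP} implies the undecidability of constructing a network state $\Psi$ with $\rank(\Psi)\leq r$; your specialization to $r=0$ --- noting that $\rank(\Psi)\leq 0$ holds iff $\Psi$ is the zero tensor, whose constructibility from the library $\mcF$ is exactly the mortality-type question shown undecidable in Theorem~\ref{thm:QMMP} --- is the natural way to make that implication precise. Your closing caveat (checking that the class of networks and allowed contractions coincides with that of Theorem~\ref{thm:QMMP}) is exactly the right point to pin down, and the paper's implicit argument relies on the same identification.
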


To examine carefully the boundary between decidability and undecidability, we consider the subclass of Boolean tensor network states.  These give insight into the general situation as the image of the support map.  We address the following two questions.

\begin{question} \label{ques2}
Given a finite library $\mcF$ of tensors, can a given Boolean tensor network state be constructed from $\mcF$?
\end{question}
\begin{question} \label{ques3}
Given a finite set $\mcF$ of tensors including swap and fanout, can a given Boolean tensor network state be constructed from $\mcF$?
\end{question}

When swap and fanout are not included (as in Question \ref{ques2}), the problem is undecidable.  In contrast to this, and to illustrate the subtlety of the issue, Corollary \ref{thm:decimp} explains how including two specific tensors in $\mcF$ means that a quadratic algorithm exists to solve Question \ref{ques3}.

This work is organized as follows.  In Section \ref{sec:TN} we will recall several important points of tensor network states, and restate a not-well-known early conjecture by Penrose which relates to the physicality of a tensor network representation of a quantum process.  This leads to a discussion of tensor networks with post-selected outputs in Section \ref{sec:PhysTN}.  An interesting feature of decidability is that many of the results hold when restricted to the class of Boolean quantum states.  In Section \ref{sec:Bool} we study this class, by first showing how many problems in this class are naturally decidable, including connecting the satisfiability problem with the physicality of a quantum state.  Seemingly slight modifications result in a host of problems related to Boolean tensor network states which are undecidable.

\section{Tensor networks and the grandfather paradox}\label{sec:TN}

\begin{figure}
\[ 
\SelectTips{cm}{10}
\begin{xy}<5mm,0mm>:
(0,0) *{\vcap-}; 
p+(0,.5) *{\bullet} **@{-}; 
p+(0,.1)  ?(0)*\dir{>},
{
p+(-2,2) **\crv{p+(-1,0)& p+(-2,1.5)};
p+(0,1)  **@{-}   ?(0)*\dir{>};
},
p+(0,1) *+{\sigma_x}*\frm{-} **@{-};
p+(0,1) *{\bullet} **@{-}; 
{
p+(-1,1) **\crv{p+(-.5,0)& p+(-1,.75)};
p+(0,.1)  **@{-}   ?(0)*\dir{>};
},
p+(0,1) **@{-}  ?(.8)*\dir{>},
p+(0,-.75)   *{\vcap},
p+(1,1); 
p+(0,-3.5) **@{-}  ?(.7)*\dir{>};
(3,0) *{\vcap-}; 
p+(3,0);
p+(0,.5) *{\bullet}  **@{-};
p+(0,.1)  ?(0)*\dir{>},
{p+(.5,1) **\crv{p+(.25,0)& p+(.5,.75)} ?(1)*\dir{>};
p+(0,1.9)  **@{-};
p+(0,.15); 
p+(0,.4)  **@{-};
},
{
p+(-1,1) **\crv{p+(-.5,0)& p+(-1,.75)};
p+(0,.1)  **@{-}   ?(0)*\dir{>};
p+(0,.5) *{\bullet} **@{-};
p+(-1,1) **\crv{p+(-.5,0)& p+(-1,.75)} ?(1)*\dir{>},
p+(1,1) **\crv{p+(.5,0)& p+(1,.75)} ?(1)*\dir{>},
},
(3,1.5) *{\vcap};
(7,3); 
p+(0,-3) **@{-}  ?(.7)*\dir{>};
\end{xy}
\]
\caption{(Closed timelike curves). Grandfather paradox and unproved theorem paradox tensor networks from \cite{lloyd2011closed}, with value $0$ (left) and $(1/\sqrt{2})(\ket{00}+\ket{11})$ (right).  The problem of determining whether a tensor network can be constructed from a set of generators such that it contains a grandfather-paradox-type contradiction (is the zero tensor) is undecidable.}\label{fig:paradoxes}
\end{figure}
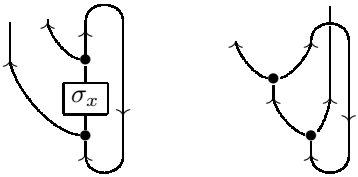

Tensor networks representing quantum states constitute a graphical representation which can be thought of as quantum circuits augmented with preparations and postselection \cite{bergholm2011categorical}. 


A grandfather-type-paradox can result from loops in a tensor network, enabling contradictions.  The tensor network formalism predicts the same result for the grandfather paradox and unproved theorem paradox tensor networks as was obtained mathematically and experimentally in \cite{lloyd2011closed}, rather than the result that would accord with Deutsch's interpretation of the  grandfather paradox network in terms of a mixed state fixed point $\rho = (\ket{0}\bra{0} + \ket{1}\bra{1})/2$ \cite{deutsch1991quantum}.  Thus \cite{lloyd2011closed} provides experimental evidence for the tensor network formalism built from unitary operators, preparations (e.g.\ into maximally entangled states), and postselection.  In particular, that a tensor network is a zero tensor iff it is forbidden by physical law.   
The following was argued by Penrose in 1967.

\begin{theorem}[Penrose, 1967 \cite{penrose1967theory}]
 The norm of a spin network vanishes iff the physical situation it represents is forbidden by the rules of quantum mechanics. 
\end{theorem}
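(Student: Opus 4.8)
The plan is to read the \emph{norm} of a spin network as the Hilbert-space norm of the state vector $\ket{\psi}$ that the network defines, and to reduce its vanishing to a combinatorial selection rule on the couplings. Recall the construction: a spin network is a trivalent graph whose edges carry $\operatorname{SU}(2)$ irreducible representations $V_j$ and whose vertices carry the associated invariant intertwiners; contracting the intertwiners along the internal edges yields a vector $\ket{\psi}$ in the tensor product of the representation spaces attached to the external edges, and the norm is $\|\psi\|$. I would take ``the physical situation is forbidden by the rules of quantum mechanics'' to mean that the prescribed couplings violate an angular-momentum selection rule, so that no state can realize them.

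The first step is the local dictionary. The space of $\operatorname{SU}(2)$-invariants in $V_a\ot V_b\ot V_c$ is one-dimensional when $(a,b,c)$ satisfy the triangle inequalities together with $a+b+c\in\ZZ$ (admissibility), and is zero otherwise. Admissibility is exactly the selection rule for combining three angular momenta, so an inadmissible vertex is the elementary forbidden coupling and its only intertwiner is the zero tensor.

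The forward implication is then immediate: if the situation is forbidden, some vertex is inadmissible, its intertwiner vanishes, the contraction gives $\ket{\psi}=0$, and hence $\|\psi\|=0$. For the converse I would prove the contrapositive---that admissibility at every vertex forces $\|\psi\|>0$---using positive-definiteness together with the multiplicity-free structure of $\operatorname{SU}(2)$ fusion: each fusion space is at most one-dimensional, so composing nonzero intertwiners through these one-dimensional channels cannot cancel, and any internal loop contributes only a positive factor $2j+1$.

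This last step is the main obstacle, and it explains why the statement concerns the \emph{norm} rather than the bare evaluation of a closed network. Sign-indefinite closed evaluations---the $6j$-symbols---can vanish even at admissible labels, so for them the ``iff'' would fail; by contrast $\braket{\psi}{\psi}$ is computed as the evaluation of the doubled network obtained by gluing the spin network to its mirror image, a sum of squared amplitudes that is nonnegative and vanishes precisely when $\ket{\psi}=0$. Establishing that this positive quantity cannot be zero for an admissible network is what secures the ``only if'' direction and hence Penrose's equivalence.
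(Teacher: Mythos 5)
First, a point of comparison: the paper does not prove this statement at all. It is quoted as Penrose's 1967 result and used as background, supported only by the worked examples that follow it (the vanishing contraction $\braket{01}{\Phi^+}$ and the criterion $\braket{\psi}{\psi}=0$ for nonphysicality). So there is no internal proof to match your argument against; what matters is whether your argument stands on its own, and here there is a genuine gap exactly where you flag ``the main obstacle.'' Your forward direction is fine, but the converse---admissibility at every vertex forces $\|\psi\|>0$---is not merely unproven in your sketch; under your reading of ``forbidden'' as a local angular-momentum selection-rule violation, it is false. Take a tetrahedral network with admissible labels at which the $6j$ symbol has one of its well-known accidental zeros, and cut open one edge carrying spin $j$. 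The resulting open network defines an intertwiner $V_j \rightarrow V_j$, which by Schur's lemma equals $\lambda \cdot \id$ with $\lambda$ proportional to the closed tetrahedral evaluation; at an accidental zero, $\lambda = 0$, so $\ket{\psi}=0$ and the norm vanishes even though every vertex is admissible. Your appeal to multiplicity-free fusion (``one-dimensional channels cannot cancel'') rules out cancellation in sequential compositions through a single irreducible channel, but not in contractions with loops, which is precisely where accidental zeros live.

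The deeper issue is that the counterexample shows ``forbidden by the rules of quantum mechanics'' cannot be cashed out as vertex-level admissibility, so the equivalence cannot be reduced to a combinatorial selection rule plus positivity. In Penrose's usage---and in the way the paper deploys the statement---``forbidden'' means that the quantum-mechanical amplitude or probability assigned to the represented process vanishes, so that the biconditional is closer to an interpretive principle identifying zero tensors with impossible processes (the reading the paper needs for Theorem \ref{thm:hiddengrandfather}) than to a theorem admitting the kind of representation-theoretic proof you attempt. Your positivity observation, that $\braket{\psi}{\psi}$ is the evaluation of the doubled network and vanishes iff $\ket{\psi}=0$, is correct and is the honest mathematical core of the statement; but with ``forbidden'' properly construed, that observation essentially \emph{is} the content, and the extra admissibility layer you built should be dropped rather than completed.
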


\begin{example}[Examples of Penrose's theorem]
Consider a Bell state $\Phi^+ = \ket{00}+\ket{11}$.  The amplitude of the first party measuring $\ket{0}$ followed by the second party measuring $\ket{1}$ is zero. This vanishing tensor network contraction is given by $\braket{01}{\Phi^+}$.  A second example is found by considering the norm of a network of connected tensors thought to represent a valid state $\ket{\psi}$. If the norm found from contracting the state with a conjugated copy of itself $\braket{\psi}{\psi}$ vanishes, the network necessarily represents a non-physical quantum state, by Penrose's theorem. We give examples of this below.  
\end{example}
 
\section{Physicality of a tensor network with postselection}\label{sec:PhysTN}
First we consider questions of measurement occurrence (as in \cite{eisert2011quantum}).  Suppose a candidate state is described as a tensor network, in terms of locally physical operations.  It is natural to consider whether such a candidate state is physical.  

A library of tensors can be composed by contraction to form a class.  To prove that the physicality question is undecidable for a given class of tensor networks, we encode a known undecidable problem.  The undecidable question then maps to the constructibility of a  network with vanishing norm (evaluated by contraction with a copy of itself). 

Of course, the word problem for a finite subgroup of the group of unitary $n\times n$ matrices is decidable by simply multiplying out the words.  The matrix mortality problem is also straightforward.  
Consider the question of whether there exists a nonphysical (zero-norm) circuit built by chaining single-qutrit gates together with an initial state and a measurement.  This is shown to be undecidable in \cite{blondel2005decidable} by directly embedding Post's correspondence problem.


\begin{remark}[The word problem for linear groups is decidable]
Consider a group given as a (not necessarily finite) list of generators and relations.  {\em The word problem for groups} asks whether two words, given in the generators, represent the same group element.  Even if we restrict to groups which are finitely presented (but not finite groups), this problem is undecidable.  Thus one might hope that a faithful unitary representation would lead to an undecidable word problem for quantum circuits.  However, not all finitely presented groups (or even residually finite groups) have a faithful representation in $\GL(n,\7K)$ for some $n$ and field $\7K$.  Those that do are called {\em linear}, and have solvable word problems: given words $\sigma$, $\tau$ one can simply multiply out the matrices and check if $\sigma \tau^{-1}$ is the identity.  Alternatively, given a tensor with an input and output wire, one could bend one of the wires to form a bell state or costate.  Checking whether the corresponding tensor network state is the generalized bell state is decidable. In fact, the word problem for linear groups in characteristic zero is solvable in $\mathsf{LOGSPACE}$ \cite{lipton1977word}.
\end{remark}

This yields the following observation:
\begin{observation}
The word problem for quantum circuits built only from unitary gates (no measurements or comeasurements) is  solvable in $\mathsf{LOGSPACE}$ and the mortality problem is trivial, i.e.\ all such circuits are physical.  
\end{observation}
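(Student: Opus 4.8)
The plan is to treat the two assertions separately, since each reduces quickly to the preceding Remark together with elementary facts about unitary matrices. For the word-problem claim, the key observation is that a quantum circuit built only from unitary gates is, after contracting the network, simply a product of the unitary matrices labelling its gates; hence the collection of all such circuits on a fixed number of qubits generates a subgroup of the unitary group $U(N) \subset \GL(N,\C)$, where $N$ is the dimension of the full state space. This is a linear group in characteristic zero, and the result of \cite{lipton1977word} asserts exactly that the word problem for such groups lies in $\mathsf{LOGSPACE}$. I would therefore reduce the circuit word problem to the group word problem: given two circuit words $\sigma$ and $\tau$, form $\sigma\tau^{-1}$ (the inverse exists because each gate is invertible) and test whether it equals the identity, which amounts to a single equality of matrices computed by iterated multiplication in logarithmic workspace.

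For the mortality claim, the point is not any computation but recognizing why the problem collapses. The matrix mortality problem asks whether some product of the generating matrices equals the zero matrix. But a product of unitaries is again unitary: unitaries are closed under multiplication and each has all singular values equal to $1$, so every product is invertible with operator norm $1$ and in particular is never the zero matrix. Thus the zero matrix never lies in the semigroup generated by the gates, and the mortality question is answered ``no'' identically. Translating through Penrose's theorem then gives the physicality statement: a tensor network is nonphysical precisely when its contracted norm vanishes, and contracting a unitary circuit against its conjugate returns $\Id$, whose norm is nonzero (its squared Frobenius norm is $N$); so no circuit built solely from unitary gates can have vanishing norm, and every such circuit is physical.

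I expect no substantive obstacle: the content lies entirely in invoking \cite{lipton1977word} correctly and in observing that unitarity forbids a zero product. The one point requiring a little care is confirming that the abstract group-theoretic space bound transfers verbatim to the circuit setting, that is, that the passage from a circuit to its matrix representation does not inflate the complexity beyond $\mathsf{LOGSPACE}$; this holds because the entries remain in a characteristic-zero field in which arithmetic and equality testing against $\Id$ are themselves $\mathsf{LOGSPACE}$-computable.
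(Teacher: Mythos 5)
Your proposal is correct and follows essentially the same route as the paper, which derives this Observation directly from the preceding Remark: unitary circuits generate a finitely generated linear group in characteristic zero, so the word problem is $\mathsf{LOGSPACE}$ by \cite{lipton1977word}, while mortality is vacuous because a product of unitaries is unitary and hence never zero. The only cosmetic caveat is your phrase ``computed by iterated multiplication in logarithmic workspace''---the $\mathsf{LOGSPACE}$ bound comes from the modular-arithmetic techniques of the cited result rather than from naive matrix multiplication, but since you invoke that result anyway, nothing is lost.
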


In contrast, in Theorem \ref{thm:QMMP} we will show that if measurement and postselection is included, the physicality problem becomes undecidable. 
First we give a simple construction that will be needed in the proof of undecidability.  Let $\|M\|_F$ be the Frobenius norm.

\begin{lemma} \label{lem:int}
Let $M$ be an $n \times n$ integer matrix.  Then $\|M\|_F^{-2}M$ can be implemented using $m=\ceil{\log_2n}$-qubit unitary gates, $m$ $\CNOT$ gates, ${m \choose 2}$ $\SWAP$ gates, and one $m$-qubit postselection operation.
\end{lemma}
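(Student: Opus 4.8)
The plan is to use the map--state duality already invoked in the text (bending a wire to form a Bell state/costate) to reduce the problem of \emph{implementing the operator} $\|M\|_F^{-2}M$ to that of \emph{preparing a state} and then performing a single \emph{postselection}. First, padding $M$ with zero rows and columns up to size $2^m\times 2^m$ (with $d:=2^m\ge n$), I would regard $M$ as an operator on $m$ qubits and vectorize it to the Choi state $\ket{M}=\sum_{ij}M_{ij}\ket{i}\ket{j}$ on two $m$-qubit registers $A,B$. Since $\braket{M}{M}=\|M\|_F^2$, the normalized Choi state is $\ket{\tilde M}=\|M\|_F^{-1}\ket{M}$. A direct contraction shows that applying the Bell costate $\sum_k\bra{k}\bra{k}$ to register $B$ against the input register bends the wire and realizes the operator $M$ itself; so the entire task reduces to preparing $\ket{\tilde M}$ using only $m$-qubit unitaries, plus this one postselection, and then accounting for scalars.

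The crux is that a generic $2m$-qubit state would require a $2m$-qubit unitary, which is not allowed, and it is the Schmidt/SVD structure of $\ket{\tilde M}$ that lets us descend to $m$-qubit gates. Writing the singular value decomposition $M=\sum_k\sigma_k\ket{u_k}\bra{v_k}$ (with $M$ real, so the Schmidt vectors are real), one has $\ket{\tilde M}=\sum_k\lambda_k\ket{u_k}_A\ket{v_k}_B$ where $\lambda_k=\sigma_k/\|M\|_F$ and $\sum_k\lambda_k^2=1$. I would then (i) load the Schmidt coefficients with an $m$-qubit unitary sending $\ket{0}_A\mapsto\sum_k\lambda_k\ket{k}_A$; (ii) copy in the computational basis using the $m$ $\CNOT$ gates (one from each qubit of $A$ into the matching qubit of $B$) to get $\sum_k\lambda_k\ket{k}_A\ket{k}_B$; (iii) rotate each register into its singular-vector basis with the two $m$-qubit unitaries $\sum_k\ket{u_k}\bra{k}$ and $\sum_k\ket{v_k}\bra{k}$; and (iv) apply the $\binom{m}{2}$ $\SWAP$ gates to carry out the bit-reversal permutation aligning the index conventions of the preparation with those of the wire-bending (a full reversal of $m$ qubits costs exactly $\binom{m}{2}$ nearest-neighbor transpositions). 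The single $m$-qubit postselection is the Bell costate on $B$, which bends the prepared state back into operator form.

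The step I expect to be the main obstacle is the scalar bookkeeping: every piece above is exact apart from the overall prefactor, so one must chase the factors of $\|M\|_F$ (one from normalizing the Choi state) and of $d^{1/2}=2^{m/2}$ (from the Bell normalization) and check that they combine to exactly $\|M\|_F^{-2}$, fixing the normalization of the postselected Bell costate accordingly (legitimate since $\|M\|_F\ge 1$ for a nonzero integer matrix). I would record this as a short internal calculation rather than belabor it, noting that for the application in Theorem~\ref{thm:QMMP} only the fact that the prefactor is a definite nonzero scalar matters, as the vanishing of a product of such blocks is unaffected by nonzero rescalings. The remaining checks are routine: that each unitary genuinely acts on only $m$ qubits, and that the $\CNOT$ and $\SWAP$ counts are as stated.
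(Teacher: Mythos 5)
Your construction produces the right tensor, but it violates the lemma's resource bound, and the violation sits precisely at the step you treat as the conclusion of the reduction. The lemma grants \emph{one $m$-qubit} postselection; your single postselection is the Bell costate $\sum_k \bra{k}_B\bra{k}_{\mathrm{in}}$ contracting register $B$ against the input register, which is a postselection on $2m$ qubits --- it projects onto a maximally entangled state of two $m$-qubit registers and cannot be factored into an $m$-qubit postselection plus allowed gates without spending roughly $m$ additional $\CNOT$s to uncompute, which blows the $\CNOT$ budget you already exhausted in step (ii). The scalar bookkeeping you defer also genuinely fails rather than being routine: a physically admissible costate has norm at most $1$, and to output exactly $\|M\|_F^{-2}M$ your Bell costate would have to be rescaled to $\|M\|_F^{-1}\sum_k\bra{k}\bra{k}$, whose norm is $2^{m/2}/\|M\|_F$; this exceeds $1$ whenever $\|M\|_F < 2^{m/2}$ (e.g.\ $M$ with a single unit entry), so $\|M\|_F\ge 1$ does not rescue it.

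The paper's proof never bends the input wire, which is how it stays within budget. It takes the SVD $M = U\Sigma V^{\top}$ with $U,V$ real orthogonal, hence $m$-qubit unitaries, and implements the diagonal middle factor directly on the straight wire: apply $V^{\top}$, copy the computational basis using the $m$ gadgets $\CNOT\circ(\bra{0}\ot\id)$ (the $\binom{m}{2}$ $\SWAP$s route the interleaved outputs into two contiguous $m$-qubit registers, as in the paper's two-qubit example), postselect the copied register on $\ket{\Psi} = \|M\|_F^{-2}\sum_i \sigma_i \ket{i}$ --- an $m$-qubit costate of norm $\|M\|_F^{-1}\le 1$, hence valid for every nonzero integer matrix --- and then apply $U$. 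This yields $\sum_k \|M\|_F^{-2}\sigma_k \ket{u_k}\bra{v_k} = \|M\|_F^{-2}M$ with exactly one $m$-qubit postselection and the exact stated scalar. Your steps (i)--(iii) are this construction viewed through map--state duality: if you straighten your Choi preparation (apply $V^{\top}$ to the actual input instead of loading Schmidt coefficients into a fresh register, keep the $\CNOT$ copy, and move the singular values from your $\lambda$-loading unitary into the postselected state), the Bell costate disappears and you recover the paper's argument inside the stated resource bound.
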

The $\CNOT$ and swap gates are used to create a generalized copy tensor. 
\begin{proof}
Given an integer matrix $M$, write its SVD $M=U \Sigma V^{\top}$; we can take $U$,$V$ unitary (in fact real orthonormal) and implement $\Sigma$ by a copy tensor and postselection $\Psi$ as in Figure \ref{fig:svd}.
\begin{figure}
\[
\SelectTips{cm}{10}
\begin{xy}<5mm,0mm>:
(-4,1);
p+(0,-2) *+{M}*\frm{-} **@{-};
p+(-2,0) *+{\frac{1}{\sum_i \sigma_i^2}},
p+(2,0) *{=},
p+(0,-2) **@{-};
(0,2);
p+(0,-1.5) *+{U}*\frm{-} **@{-};
p+(0,-1.5) *{\bullet} **@{-};
{
p+(2,-2) **\crv{p+(1,0)& p+(2,-1.5)};
p+(0,-.4) *{\Psi},
p+(.6,0) **@{-};
p+(-.6,-1) **@{-};
p+(-.6,1) **@{-};
p+(.6,0) **@{-};
},
p+(0,-1.5) *+{V^\top}*\frm{-} **@{-};
p+(0,-1.5)  **@{-};
\end{xy}
\]
\caption{Quantum SVD to implement a $n \by n$ integer matrix up to global rescaling.  The wire ending in the measurement $\Psi$ has $n$ states and the measurement is $|\Psi\rangle = \frac{1}{\sum_i \sigma_i^2}\sum_i \sigma_i|i \rangle$ where $\sigma_i$ are the singular values of $M$ and the empty circle denotes a copy tensor. In the qubit case, the copy tensor is $\CNOT \circ (\bra{0} \ot \id)$.} \label{fig:svd}
\end{figure}
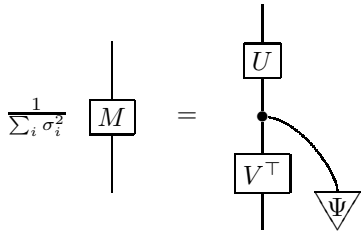
\end{proof}

For example, a two-qubit copy tensor used as a fanout can be implemented as $(\id \ot \SWAP \ot \id)\circ ((\CNOT \circ (\bra 0 | \ot \id))^{\ot 2})$.
Given a $3 \by 3$ integer matrix $M$, encode it as the two-qubit (nonunitary) operator $M'$ as follows:
\begin{small} 
\[
M= 
\begin{pmatrix}
p &0& 0\\
0 &r& 0\\
q &s& 1\\
\end{pmatrix}
\qquad M' = 
\bordermatrix{
   & \bra{00} & \bra{01} & \bra{10} & \bra{11}\cr
\ket{00} & p  & 0  & 0  & 0\cr
\ket{01} & 0  & r  & 0  & 0\cr
\ket{10} & q  & s  & 1  & 0\cr
\ket{11} & 0  & 0  & 0  & 1\cr
}.
\]
\end{small} 
It is sufficient for undecidability (\cite{paterson1970unsolvability}) to consider matrices of the form $M$, $p,r >0$, $q,s \geq 0$ together with the matrix 
\begin{small} 
\[
A= 
\begin{pmatrix}
1 &0& 1\\
-1 &0& -1\\
0 &0& 0\\
\end{pmatrix}, \qquad
A'=
\bordermatrix{
   & \bra{00} & \bra{01} & \bra{10} & \bra{11}\cr
\ket{00} & 1  & 0  & 1  & 0\cr
\ket{01} &-1  & 0  &-1  & 0\cr
\ket{10} & 0  & 0  & 0  & 0\cr
\ket{11} & 0  & 0  & 0  & 1\cr
}.
\]
\end{small} 
We have the SVD $A'=U \Sigma V^{\top}$.  So the postselection implementing $\Sigma$ will be given by $|\Psi_A\rangle = \frac{1}{5}(2|00\rangle + |01\rangle$).  
 
Now we can establish the undecidability of Question \ref{ques1} for large enough $n$ and $d$ by reducing to the matrix mortality problem.

\begin{theorem} \label{thm:QMMP}
When $n \geq 8$ and $d \geq 2$, or $n=2$, $d \geq 5$, Question \ref{ques1} is undecidable.
\end{theorem}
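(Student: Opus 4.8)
The plan is to reduce from the \emph{matrix mortality problem} for integer matrices, which is undecidable by Paterson \cite{paterson1970unsolvability}: given a finite set of integer matrices, decide whether the semigroup they generate contains the zero matrix. By the Penrose criterion recalled above, a tensor network is nonphysical exactly when it is the zero tensor (vanishing norm); and a network built by chaining library generators realizes precisely a product (word) in those generators. Hence ``every network from $\mcF$ is physical'' is equivalent to ``no word in the generators is the zero tensor,'' so an algorithm for Question \ref{ques1} would decide non-mortality. It therefore suffices to encode a mortality instance as a library $\mcF$ in which words evaluating to the zero tensor correspond exactly to products evaluating to the zero matrix.

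First I would fix the generators to Paterson's normal form: matrices of the type $M$ (with $p,r>0$, $q,s\ge 0$) together with the single matrix $A$, for which mortality is already undecidable. Next I would pass to qubits via the block-diagonal embedding $M\mapsto M'$, $A\mapsto A'$ exhibited above, and record the two facts that make the reduction faithful. First, each embedding is block diagonal, with the $3\times 3$ data living in $\operatorname{span}(\ket{00},\ket{01},\ket{10})$ and a decoupled $1$ in the $\ket{11}$ slot; hence any product $M'_w$ has $3\times 3$ block equal to the corresponding word $M_w$, and the extra coordinate never mixes in. Second, $M_w=0$ iff the restriction of $M'_w$ to that three-dimensional subspace vanishes, which I detect as a zero-norm network by bending the input wire into a Bell (co)state supported on the subspace, turning the operator into its vectorization whose squared norm is $\|M_w\|_F^2$. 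By Lemma \ref{lem:int}, each embedded matrix $M'_i$ is realizable as a single postselected gate (a unitary dressing $U_i\Sigma_i V_i^\top$ with $\Sigma_i$ supplied by a copy tensor and one postselection), produced up to the strictly positive scalar $\|M'_i\|_F^{-2}$; since these rescalings neither create nor destroy zeros, the induced map on words still sends exactly the mortal words to zero-norm networks.

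For the thresholds I would count the two regimes separately. Embedding $3\times 3$ Paterson matrices requires $\ceil{\log_2 4}=2$ qubits, giving arity $d=2$; taking $\mcF$ to be the embedded generators $\{M'_i\}\cup\{A'\}$, each a single gate with one postselection, makes $|\mcF|$ equal to the number of matrices, and the known quantitative form of the mortality theorem in Paterson normal form yields $n\ge 8$. For the alternative regime I would instead start from the sharpening in which mortality is undecidable for only \emph{two} integer matrices at the cost of larger dimension; embedding these into $m=\ceil{\log_2 N}\le 5$ qubits and again invoking Lemma \ref{lem:int} packages each as a single postselected gate, so $|\mcF|=2$ with arity $d\ge 5$.

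The main obstacle I expect is the faithfulness bookkeeping rather than any single hard estimate. One must verify simultaneously that the qubit embedding preserves the semigroup structure (the decoupled $\ket{11}$ coordinate together with the choice of Bell (co)state on the genuine subspace), that the positive Frobenius rescalings introduced by Lemma \ref{lem:int} can neither make a non-mortal product vanish nor let a mortal product survive, and that the count of primitive gates and their arity can be pushed down to exactly the claimed $(n,d)$ thresholds using the quantitative versions of the mortality theorem. Once these are pinned down, the equivalence ``$\mcF$ yields only physical networks $\iff$ the encoded semigroup is non-mortal'' transfers undecidability directly to Question \ref{ques1}.
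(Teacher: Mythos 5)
Your proposal follows essentially the same route as the paper's proof: reduce Question \ref{ques1} to integer matrix mortality, implement the integer matrices as single postselected gates via Lemma \ref{lem:int}, and embed Paterson's eight $3\times 3$ matrices into two-qubit operators (giving $n\geq 8$, $d\geq 2$) and the two-matrix $24\times 24$ instance of Cassaigne et al.\ into five-qubit operators (giving $n=2$, $d\geq 5$). Your extra bookkeeping --- noting that the decoupled $\ket{11}$ slot means a word in the primed matrices is never literally zero, so vanishing of the $3\times 3$ block must be detected by a Bell-costate contraction computing the Frobenius norm, and that the positive rescalings from Lemma \ref{lem:int} preserve (non)vanishing --- makes explicit faithfulness details that the paper's proof leaves implicit, and is a welcome tightening rather than a departure.
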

\begin{proof}
Question \ref{ques1} asks whether, given a set of (postselection-including) operators $M_1, M_2, \dots, M_n$, there is a finite word such that $M_{i_1}M_{i_2}\cdots M_{i_k}=0$.  The vanishing of the morphism defined by the contraction corresponding to the word means the sequence is nonphysical.

By Lemma \ref{lem:int}, we may assume our operators $M_i$ are integer matrices.  Then applying results on the integer Matrix Mortality Problem gives undecidability for our problem.  In \cite{paterson1970unsolvability}, it was established that eight $3 \times 3$ integer matrices sufficed for undecidability, and we show above how to embed this into two-qubit operators.  In \cite{cassaigne1998examples} it was shown that two $24 \times 24$ integer matrices also suffice, and these can be embedded in five-qubit operators, which gives the second part of the result.
\end{proof}

We have so far been considering decidability in tensors in which we place no restriction on the components.  It turns out that many decidability results can readily be recovered when considering the subclass of Boolean tensor networks.  These also afford an illuminating connection to solving satisfiability instances, by tensor network contraction. 

\section{Boolean tensor network states}\label{sec:Bool}
A quantum state is called Boolean iff it can be written in a local basis with amplitude coefficients taking only binary values $0$ or $1$. We relate such states with Boolean functions, allowing for a host of tools from algebra to be applied to their analysis. 

\begin{remark}[Notation]
A number in $\{0,1\}^n$ denotes an $n$-long Boolean bit string.  If $x$ is a bit string, 
then we use $\ket{x}$ as an index for a basis state.  
If $f:\{0,1\}^n\rightarrow \{0,1\}$ then $\ket{f(x)}$ also indexes a basis state.  
\end{remark}

 Let 
 \be 
 f:\{0,1\}^n \rightarrow \{0,1\}
 \ee 
 be any Boolean function.  Then 
 \be 
 \psi_{\7 B} = \sum_{\x} \ket{\x}\ket{f(\x)}
 \ee 
 is a representative in the class of Boolean states.  In this fashion, every Boolean function gives rise to a quantum state. 
A {\em Boolean relation} $R$ of arity $n$ is a subset of $\{0,1\}^n$ (for example, an \OR~clause in \SAT~can be thought of as a relation \OR~$\subset{\{0,1\}^3}$ as \OR~$=\{0,1\}^3 \setminus \{111\}$).  
Every quantum state written in a local basis with amplitude coefficients taking binary values in $\{0,1\}$ gives rise to a Boolean relation.  

\begin{theorem}[Boolean tensor network states]\label{theorem:btns}
 A tensor network representing a Boolean quantum state is determined from the classical network description of the corresponding function.  
\end{theorem}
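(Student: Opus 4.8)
The plan is to make the (deliberately informal) statement precise by exhibiting the correspondence already advertised in the text as ``the image of the support map,'' and showing that it intertwines tensor contraction with the classical relational composition that defines the function. First I would recall the support map $\operatorname{supp}$, sending a tensor $T$ with entries in $\C$ to the Boolean relation $\operatorname{supp}(T)=\{x : T_x \neq 0\}$. By construction the Boolean state $\psi_{\7 B}=\sum_{\x}\ket{\x}\ket{f(\x)}$ is exactly the indicator tensor of the relation $\{(\x,f(\x))\}$, so on Boolean states the support map identifies the state with the graph of $f$; recovering $f$ from the state, and the state from $f$, is then immediate.

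The core step is functoriality. I would check that each elementary generator used to build the network has as its support precisely the classical relation it encodes: the copy/fanout tensor maps to the equality (copy) relation, the $\CNOT$-based tensor to its Boolean gate relation, swap to the transposition of coordinates, and each \OR-type tensor to the corresponding clause relation, as in the \SAT~dictionary given above. I would then show that contraction is compatible with relational composition. Concretely, the entry of a contracted network at an external index is $\sum_{\text{internal}}\prod_{\text{vertices}}(\cdots)$; for Boolean tensors every summand is $0$ or $1$, and it equals $1$ exactly when the chosen internal assignment is jointly consistent with every vertex relation. Hence the support of the contraction is the set of external assignments admitting at least one consistent internal witness, which is precisely the existential projection defining the composite relation. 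Replacing each generator in the classical network diagram by its indicator tensor and each wire by a contraction therefore yields a tensor whose support is the relation computed by the classical network, namely the graph of $f$; this is the assertion that the network is determined by the classical description.

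The step I expect to be the main obstacle is the passage from \emph{support is determined} to \emph{the amplitudes are determined and remain Boolean}. Over $\C$ the inner sum can exceed $1$ whenever an external assignment has several internal witnesses ($1+1=2$), so a generic Boolean network contracts to a tensor of nonnegative integer (multiplicity) entries rather than a $\{0,1\}$ tensor. I would resolve this by identifying the relevant semiring: working in the Boolean semiring $(\{0,1\},\vee,\wedge)$, contraction is \emph{exactly} relational composition and the class of Boolean tensors is closed under it, so the construction is a genuine monoidal functor and the theorem holds unconditionally. Over $\C$ the same conclusion holds precisely when the classical network is unambiguous, i.e.\ each accepted external assignment has a unique internal witness; this multiplicity-freeness is equivalent to the classical network computing a single-valued function, which is exactly the hypothesis built into $\psi_{\7 B}$. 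Making this equivalence precise, and observing that it is the determinism condition separating a function from a relation, is where the real content lies, and it is what ties this theorem to the later (un)decidability questions about constructing a prescribed Boolean state from a library $\mcF$.
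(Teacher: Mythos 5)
Your proposal is correct and takes essentially the same route as the paper, which establishes the theorem only by citation to Biamonte et al.\ \cite{biamonte2011categorical} with the one-line recipe you reconstruct: let each classical gate act on a linear space (i.e.\ replace it by the indicator tensor of its graph) and turn composition of functions into contraction of tensors, your support-map bookkeeping and witness-counting merely supplying the verification the paper delegates to that reference. One caution on your final step: multiplicity-freeness over $\C$ follows from the gates being deterministic functions wired acyclically from inputs to outputs (the internal wire values are then forced by the external input assignment), and is \emph{not} equivalent to mere single-valuedness of the composite relation --- a network computing a single-valued relation can still carry redundant internal witnesses (e.g.\ a closed internal wire contributes a multiplicative factor of $2$) --- but since the hypothesis is a classical circuit description of a function, only the implication you actually use is needed, and the argument stands.
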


Theorem \ref{theorem:btns} was given in \cite{biamonte2011categorical}, where the quantum tensor networks are found by letting each classical gate act on a linear space and from changing the composition of functions, to the contraction of tensors.  

Raising or lowering an index transforms kets to bras or vice versa, given a basis.  
\begin{example}[\AND-tensors]
As an example of a Boolean logic tensor, consider the \AND-tensor defined as 
$$ 
\text{\AND}^i_{jk} = \ket{00}\bra{0} + \ket{01}\bra{0} + \ket{10}\bra{0} + \ket{11}\bra{1}
$$ 
We depict the contraction of the output of the \AND-tensor with $\ket{1}$ and $\ket{0}$ as\\ 
\begin{center}
\epsfig{figure=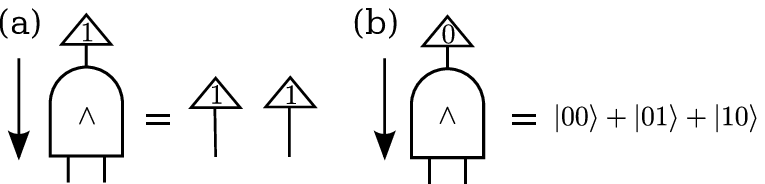, width=.45\textwidth}  
\end{center}
In (a) the contraction results in creation of the product state $\ket{11}$ and in (b) the contraction yields $\ket{00} + \ket{01} + \ket{10}$.  
\end{example}

\subsection{Satisfiability in Boolean tensor network states}
\begin{theorem}[Counting \SAT~solutions]\label{theorem:3-SAT}
Let $f$ be given to represent a \SAT~instance. Then the standard two-norm length squared can be made to give the number of satisfying assignments of the instance. 
\end{theorem}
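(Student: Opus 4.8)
The plan is to realize $f$ as a Boolean tensor network, postselect its output wire onto $\ket{1}$, and read off the number of satisfying assignments as the squared two-norm of the resulting vector. The only nontrivial ingredient is the faithfulness of the network construction; the norm computation itself is immediate once the basis states are seen to be orthonormal.

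First I would write the \SAT{} instance as a conjunction of clauses $f = C_1 \wedge \cdots \wedge C_m$ and invoke Theorem~\ref{theorem:btns} to assemble the corresponding tensor network: a copy (fanout) tensor duplicates each variable wire $x_i$ to every clause in which it appears, each clause $C_j$ is built from an \OR-tensor together with \NOT-tensors on the negated literals, and the clause outputs are conjoined by \AND-tensors into a single output wire. By Theorem~\ref{theorem:btns}, contracting this network against a fixed input $\ket{\x}$ returns $\ket{f(\x)}$ on the output wire, so leaving the $n$ input wires open produces exactly the Boolean state
\[
\psi_{\7 B} = \sum_{\x \in \{0,1\}^n} \ket{\x}\ket{f(\x)}.
\]

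Next I would contract (postselect) the output wire against $\bra{1}$. Since $f(\x)\in\{0,1\}$, the ket $\ket{f(\x)}$ is a computational basis state, and orthonormality gives $\langle 1 | f(\x)\rangle = 1$ precisely when $\x$ satisfies the instance and $0$ otherwise. Hence the postselected vector
\[
\phi = (\id^{\otimes n} \otimes \bra{1})\,\psi_{\7 B} = \sum_{\x : f(\x)=1} \ket{\x}
\]
is the uniform superposition over satisfying assignments. The distinct basis kets $\ket{\x}$ are pairwise orthonormal, so
\[
\|\phi\|^2 = \langle \phi | \phi\rangle = \sum_{\x : f(\x)=1} 1 = \#\{\x : f(\x)=1\},
\]
the number of satisfying assignments. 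Equivalently, and avoiding explicit postselection, one can phrase this as the expectation $\langle \psi_{\7 B}|\,(\id^{\otimes n} \otimes \ketbra{1}{1})\,|\psi_{\7 B}\rangle$, which yields the same count; this is the sense in which the standard two-norm length squared ``can be made to give'' the number of solutions.

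The main obstacle is not the norm evaluation but verifying that the contraction of the copy, \OR, \NOT{}, and \AND{} tensors faithfully computes $f$ with $0$/$1$ amplitudes and no spurious cross terms, so that the coefficient of $\ket{\x}\ket{b}$ in the contracted network is $1$ iff $b = f(\x)$. This is exactly the content of the Boolean-function-to-tensor correspondence: because every generator has entries in $\{0,1\}$ and fanout is realized by the copy tensor, each contraction preserves the integer-valued, $0$/$1$ structure of the amplitudes. Once this is established, $\|\phi\|^2$ is a nonnegative integer equal to the count of satisfying assignments, proving the claim.
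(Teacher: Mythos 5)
Your proposal is correct and follows essentially the same route as the paper: the paper's state $\psi_f = \sum_{\x} \ket{\x}\braket{f(\x)}{1} = \sum_{\x} f(\x)\ket{\x}$ is exactly your postselection of the Boolean state $\psi_{\mathbb{B}}$ on $\bra{1}$, and the norm computation $\|\psi_f\|^2 = \sum_{\x} f(\x)$ via orthonormality of the $\ket{\x}$ (the paper phrases it through $f(\x)f(\y)\langle\x,\y\rangle$ and idempotence of $f$) matches yours. Your additional detail on assembling the network from copy, \OR, \NOT, and \AND{} tensors just makes explicit what the paper delegates to Theorem~\ref{theorem:btns}.
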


\begin{proof}
 The quantum state takes the form 
 \be
 \psi_f = \sum_{\x} \ket{\x}\braket{f(\x)}{1} = \sum_{\x} f(\x) \ket{\x}
 \ee 
 We calculate the inner product of this state with itself viz 
 \be 
 ||\psi||^2=\sum_{\x \y} f(\x) f(\y) \langle \x, \y\rangle = \sum_\x f(\x) 
 \ee 
 which gives exactly the number of satisfying inputs.  This follows since $f(\x)f(\y)=\delta_{\x\y}$.   
 We note that for Boolean states, the square of the two norm in fact equals the one norm.  
\end{proof}

\begin{remark}[Counting \SAT~solutions]
  We note that solving the counting problem \eqref{theorem:3-SAT} is known to be {\sf \#P}-hard.
\end{remark}

\begin{corollary}[Solving \SAT~instances] 
 The condition 
 \be 
 ||\psi_f|| > 0
 \ee 
 implies that the {\sf 3-SAT} instance corresponding to $f$ has a satisfying assignment, which corresponds to an {\sf NP}-complete decision problem. 
\end{corollary}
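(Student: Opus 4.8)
The plan is to obtain this as an immediate corollary of Theorem~\ref{theorem:3-SAT}. First I would recall from that theorem that the squared two-norm evaluates to $\|\psi_f\|^2 = \sum_{\x} f(\x)$, so the norm itself is a bookkeeping device that tallies satisfying assignments. The key arithmetic observation is that each summand $f(\x)$ lies in $\{0,1\}$, so $\sum_{\x} f(\x)$ is a nonnegative integer, and it is strictly positive if and only if at least one term equals $1$.

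Next I would translate this back into logic. Since $f(\x) = 1$ precisely when the bit string $\x$ satisfies the \SAT~instance encoded by $f$, and since the norm is real and nonnegative so that $\|\psi_f\| > 0 \iff \|\psi_f\|^2 > 0$, the condition $\|\psi_f\| > 0$ holds exactly when the set of satisfying assignments is nonempty. This makes positivity of the norm equivalent to satisfiability of the instance, which yields the asserted implication (indeed an equivalence).

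Finally I would invoke the classical Cook--Levin theorem: deciding whether a \SAT~instance admits a satisfying assignment is {\sf NP}-complete. Combined with the equivalence above, this identifies the decision problem attached to the predicate $\|\psi_f\| > 0$ as {\sf NP}-complete.

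I expect no genuine obstacle here, as the entire content is carried by Theorem~\ref{theorem:3-SAT} together with the integrality of the Boolean amplitudes. The only point meriting care is that the reduction is exact rather than approximate: because every amplitude is $0$ or $1$ and the cross terms vanish by orthonormality of the basis, $\langle \x, \y\rangle = \delta_{\x\y}$, there is no numerical cancellation that could hide a positive count, so a strictly positive norm faithfully certifies a satisfying assignment.
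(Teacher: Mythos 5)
Your proposal is correct and matches the paper's route exactly: the paper states this corollary without a separate proof, treating it as immediate from Theorem~\ref{theorem:3-SAT}'s identity $\|\psi_f\|^2 = \sum_{\x} f(\x)$, which is precisely the derivation you spell out. Your added remarks on integrality of the count, the equivalence $\|\psi_f\|>0 \iff \|\psi_f\|^2>0$, and the absence of cancellation are sound elaborations of the same argument, not a different approach.
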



\begin{remark}[Graphical depiction]
Below (a) gives a network realization of the function and determining if the network in (b) contracts to a value greater than zero solves a {\sf 3-SAT} instance.   
\begin{center}
 \epsfig{figure=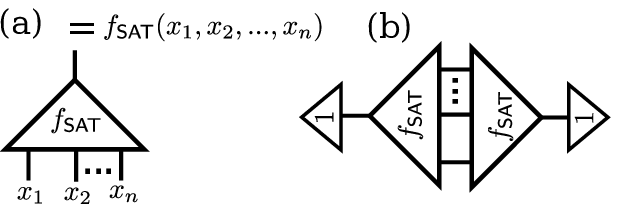,width=.4\textwidth}
\end{center}
\end{remark}

As we have considered simple restrictions on problems to transform a generally undecidable problem, into a decidable one, we can also place suitable restrictions on satisfiability problems such that they result in a class of only satisfiable instances. 

Suppose a state is defined by a tree tensor network \cite{MPSreview08,TNSreview09} such that the support of the self-contraction of tensor $T$ on each node results in the identity map (i.e.\  $\text{Supp}(TT^\dagger)=\11$).  Then the CSP defined by that state is satisfiable. 

\subsection{Decidability in Boolean tensor network states}
Now consider the {\em implementability} question: given a Boolean tensor $f$, is there some wiring of a collection $\mcF$ of Boolean tensors that produces it?  This is a word problem for planar operads, and such problems are in general undecidable.  

\begin{theorem}
Question \ref{ques2} is in general undecidable.
\end{theorem}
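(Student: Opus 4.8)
The plan is to exhibit a single finite library $\mcF$ of Boolean tensors together with a target Boolean tensor network state $\psi$ so that $\psi$ is constructible from $\mcF$ if and only if a fixed universal Turing machine halts on a given input; since halting is undecidable, Question \ref{ques2} is too. The enabling observation is the one indicated by the ``word problem for planar operads'' remark: once swap and fanout are forbidden (exactly the hypothesis separating Question \ref{ques2} from Question \ref{ques3}), a planar $\mcF$-network is a rigid one-dimensional circuit, i.e.\ a spacetime diagram, in which the wires form a tape and successive layers of nearest-neighbor Boolean gates implement time steps. An $\mcF$-network is thus a \emph{word} in the generators and its contraction is the value of that word, so constructibility of $\psi$ is precisely the membership (word) problem for the finitely generated planar operad that $\mcF$ generates.

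First I would realize each transition rule of the machine as a local Boolean tensor acting on a bounded number of adjacent wires, with head motion encoded as a marker hopping between neighboring wires via a two-wire gate, which is planar and needs no swap. To obtain an unbounded tape while keeping every individual network finite, I include state tensors of $\ket{0}$ type to append blank cells and effect tensors of $\bra{\cdot}$ type to erase them; these are genuinely $0/1$-valued and carry none of the power of swap or fanout. Taking $\psi$ to be the canonical accepting configuration, a network obtained by stacking $t$ layers contracts to the configuration after $t$ steps, and appending the cleanup effects when the halting state first appears yields $\psi$. This establishes completeness: a halting computation produces a constructing network, with the network size playing the role of the existentially quantified ``number of steps.''

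The crux, and the step I expect to be the main obstacle, is soundness: no spurious wiring may contract to $\psi$. I would enforce this by taking every generator to be a (partial) Boolean function, so that the whole diagram simulates the machine deterministically, and by designing the cleanup tensors to produce $\psi$ only on a genuine halting configuration; here the absence of swap and fanout is exactly what forbids the reorderings and duplications that could otherwise forge the target. I must also verify the routine bookkeeping: that $\mcF$ is finite with tensors of bounded arity and lies in the Boolean class, and that ill-typed or ``stuck'' partial wirings contract to tensors distinct from $\psi$. With soundness and completeness together, constructibility of $\psi$ mirrors halting and the undecidability transfers. A parallel route would instead reduce from the word problem for finitely presented semigroups, encoding generators as composable Boolean tensors, but the Turing-machine route makes the unbounded computation most transparent.
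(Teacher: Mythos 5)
Your overall strategy (reduce an undecidable word/halting problem to constructibility) is the right one, but note that the paper's own proof is a one-line appeal to the Cook--Bruck undecidability theorem for implementability of Boolean predicates without swap or fanout \cite{Cook2005implementability}; your proposal amounts to re-proving that theorem from scratch, so the entire burden falls on the soundness step you yourself flag as the crux --- and that step, as sketched, fails in two concrete ways. First, your enabling claim that a planar $\mcF$-network ``is a rigid one-dimensional circuit, i.e.\ a spacetime diagram'' is false. Forbidding swap and fanout does not force layered, grid-like wirings: planar networks still admit bent wires, contractions between two legs of the \emph{same} tensor, connections between non-adjacent ``layers,'' and arbitrarily nested sub-diagrams. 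The object generated by $\mcF$ is a planar operad, not a monoid of layered words, so constructibility is not the word problem for a finitely generated monoid, and your soundness argument must rule out \emph{every} planar wiring, not just the spacetime-shaped ones. Taming exactly this zoo of wirings is the real content of the Cook--Bruck proof; asserting that determinism of the generators plus absence of swap/fanout ``forbids forgery'' does not address it.

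Second, your library is too generous to be sound even against layered wirings. Since $\mcF$ contains $\ket{0}$-type state tensors for appending blank cells together with the transition tensors, nothing ties the bottom of a network to the machine's actual input: one can build, directly from state tensors, a configuration one step before halting (or feed arbitrary tape contents into a transition layer) and then apply your cleanup effects, producing the canonical accepting configuration $\psi$ whether or not the machine halts on the designated input. As proposed, constructibility of $\psi$ detects reachability of the accepting configuration from \emph{some} buildable configuration, which is not the halting problem. Repairing this requires a unique start tensor encoding the input and a typing or computation-history mechanism that prevents forged configurations --- and because every Boolean wire carries only one bit, any ``type'' must be spread across bundles of wires that a planar adversary is free to split and reroute one wire at a time (no fanout is needed to do so). There is also a smaller bookkeeping issue: the accepting configuration's tape length depends on the (unknown) running time, so the fixed target tensor $\psi$ must be reachable by cleanup from runs of every length, which your sketch assumes but does not arrange. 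Until these points are handled, the proposal is a plausible program, not a proof; the paper sidesteps all of it by citing \cite{Cook2005implementability} directly, with the caveat (which you correctly internalized) that the reduction only applies when swap and fanout are excluded, in contrast to Question \ref{ques3}.
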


\begin{proof}
We rely on the proof of undecidability of Question \ref{ques2} for Boolean predicates by Cook and Bruck \cite{Cook2005implementability}.  As long as we consider only Boolean predicates in our tensor network and do not introduce swap or fanout (copy tensor), the result applies directly here. 
\end{proof}

Represent a function $x\!:\!\{1, \dots, m\} \!\ra\! R$ by a matrix with entries $x^i_j \in \{0,1\}$, $i = 1, \dots, m$, $j=1, \dots, n$.  
Given a Boolean function $f$ of arity $m$, a Boolean relation $R$ of arity $n$, and a function $x:\{1, \dots m\} \ra R$, denote by $f(x)$ the element 
$
\left ( f(x^1_1, \dots, x^m_1), f(x^1_2, \dots, x^m_2), \dots, f(x^1_n, \dots, x^m_n) \right )$
of $\{0,1\}^n.$


\begin{defn}
A Boolean relation $R$ of arity $n$ is an {\em invariant} of a Boolean function $f$ of arity $m$ if for 
any $x:[m] \ra R$, 
$f(x) \in R$. 
A Boolean function $f$ is a {\em polymorphism} of a Boolean relation $R$ if $R$ is an invariant of $f$. 
\end{defn}

If $S$ is a set of Boolean relations, let $\Pol(S)$ be the set of polymorphisms of every relation $R \in S$.  If $B$ is a set of Boolean functions, let $\Inv(B)$ be the set of invariants of every function $ f \in B$.   By \cite{bodnarchuk1969galois,geiger1968closed}, the co-clone of $S$ is $\Inv ( \Pol (S))$ and the clone of $B$ is $\Pol(\Inv(B))$.

\begin{theorem}
Let $\mcF$ be a set of Boolean tensors containing the copy tensor on three bits. Then the set of Boolean tensors constructable from $\mcF$ is $\Inv ( \Pol (\mcF))$.  
\end{theorem}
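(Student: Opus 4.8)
The plan is to reduce the statement to the fundamental Galois correspondence between operations and relations \cite{bodnarchuk1969galois,geiger1968closed}, by which the co-clone $\Inv(\Pol(\mcF))$ is exactly the set of Boolean relations \emph{primitive-positive definable} from $\mcF$, that is, definable by a formula built from atoms in $\mcF$, equality, conjunction, and existential quantification. It therefore suffices to identify ``constructible from $\mcF$ by wiring'' with ``primitive-positive definable from $\mcF$'', after which the theorem is immediate. Throughout I would use the support map of Theorem \ref{theorem:btns} to identify a Boolean tensor of arity $n$ with its support relation $R \subseteq \{0,1\}^n$, so that a relation counts as constructible precisely when it is the support of some network wired from $\mcF$ (the image of the support map).

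First I would translate each elementary tensor-network operation into a relational operation at the level of supports. Juxtaposing two tensors corresponds to conjunction over disjoint variable sets, and wiring two legs together while contracting the internal edge corresponds to existential quantification over the shared variable. The point that makes this work at the Boolean level is that all tensors involved have non-negative entries, so a contracted entry is nonzero exactly when a witnessing internal assignment exists; hence the support of a contraction is the existential projection of the conjunction, independent of the precise numerical values (which may exceed $1$). This shows that every network wired from $\mcF$ has a support that is primitive-positive definable from $\mcF$, giving the inclusion of constructible relations into $\Inv(\Pol(\mcF))$.

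For the reverse inclusion I would realize an arbitrary primitive-positive formula as a network. Each atom $R_j \in \mcF$ becomes a copy of the corresponding tensor, the bound variables become internal contracted edges, and the free variables become output legs. The essential use of the hypothesis appears here: a variable occurring in several atoms must be routed to several legs, and this fan-out is supplied by the three-bit copy tensor, whose support is the ternary equality relation $\{000,111\}$. Wiring three-bit copies together in a tree yields the $k$-ary equality relation for every $k$, hence arbitrary variable identification and fan-out; binary equality is just a plain wire, and permutations of legs are free because we allow general (non-planar) wiring. Thus every primitive-positive definable relation is constructible, which completes the identification and, with \cite{bodnarchuk1969galois,geiger1968closed}, the proof.

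I expect the main obstacle to be the careful bookkeeping in the copy-tensor step: one must verify that the single generator ``copy on three bits'' generates all finite-arity equality relations, and that each completed network has support exactly the intended relation, neither a larger nor a smaller set. This is precisely the ingredient that distinguishes the present, cleanly characterized situation from Question \ref{ques2}, where the absence of fan-out blocks the reduction to the Galois correspondence.
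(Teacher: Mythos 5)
Your proposal is correct and takes essentially the same route as the paper: both invoke the Galois correspondence of \cite{bodnarchuk1969galois,geiger1968closed} and then identify the co-clone generated by $\mcF$ with the set of wirable networks, your primitive-positive-definability formulation being the standard equivalent of the paper's closure under cartesian product, variable identification, and projection (the paper realizes projection by a plus state obtained by identifying two legs of the three-bit copy tensor, which is exactly your self-contracted copy, needed e.g.\ to cap a bound variable occurring in only one atom). Your additional observations---that non-negativity of the entries makes the support of a contraction equal to the existential projection of the conjunction even when intermediate values exceed $1$, and that trees of ternary copies generate every $k$-ary equality relation---are points the paper leaves implicit and tighten rather than change the argument.
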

\begin{proof}
By \cite{bodnarchuk1969galois,geiger1968closed}, the co-clone of $\mcF$ is $\Inv ( \Pol (S))$. 
Thus it remains to show that the notion of co-clone of a set of relations containing the all-equal relation, and constructable tensor networks built from the corresponding Boolean tensors is the same. 
A co-clone is closed under cartesian product, identification of variables, and projection.  In the category of relations, cartesian product is the monoidal (tensor) product.  Closure under identification means connecting two wires, from the same or different tensors.  Closure under projection is equivalent to connecting a plus state to a wire of the relation; this can be implemented by identifying two legs of a three-legged all-equal relation.
\end{proof}

\begin{corollary}\label{thm:decimp}
There is a quadratic algorithm for Question  \ref{ques3}.
\end{corollary}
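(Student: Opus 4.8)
The plan is to reduce Question \ref{ques3} to a single membership test in the co-clone $\Inv(\Pol(\mcF))$ and then to use the finite structure of Post's lattice to carry that test out in quadratic time. First I would check that the hypotheses of Question \ref{ques3} match those of the theorem immediately above: fanout is precisely the copy tensor on three bits (the ternary all-equal relation), and swap supplies closure under permutation of wires. Hence the set of Boolean tensor network states constructable from $\mcF$ is exactly $\Inv(\Pol(\mcF))$, and the question ``can the target relation $R$ be built from $\mcF$?'' becomes ``is $R \in \Inv(\Pol(\mcF))$?''

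Next I would apply the Galois connection of \cite{bodnarchuk1969galois,geiger1968closed}: one has $R \in \Inv(\Pol(\mcF))$ iff $\Pol(\mcF) \subseteq \Pol(R)$, i.e.\ iff every polymorphism of $\mcF$ also preserves $R$. The crucial simplification is that $\Pol(\mcF)=\bigcap_{S \in \mcF}\Pol(S)$ is a single Boolean clone, and by Post's classification there are only finitely many Boolean clones, each generated by a fixed finite set of operations of arity at most three (projections, constants, negation, $\wedge$, $\vee$, the ternary majority, and the ternary sum $x\oplus y\oplus z$); note that the presence of fanout does not constrain this clone, since the relation $\{000,111\}$ is preserved by every function, consistent with its role being only to force the full co-clone closure. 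Because invariance of a relation is preserved under composition, $R$ lies in $\Inv(\Pol(\mcF))$ iff $R$ is invariant under each generator of $\Pol(\mcF)$. The algorithm is therefore: identify which clone $\Pol(\mcF)$ is (a computation depending only on the fixed input $\mcF$), read off its bounded-arity generators, and test $R$ against each.

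Finally I would bound the cost of the invariance tests. Writing $R$ as a list of $m$ tuples over its $n$ wires, closure under a unary or binary generator ($\wedge$, $\vee$, negation) is verified by ranging over all tuples or pairs of tuples and testing membership of the result, which with a hash table on $R$ costs $O(m^2)$ — quadratic. The one genuinely delicate point, and what I expect to be the main obstacle, is the ternary generators, for which a naive closure test is cubic. I would remove this by exploiting structure rather than brute force: closure under $x\oplus y\oplus z$ is equivalent to $R$ being a coset of a linear subspace, decidable by Gaussian elimination on the difference vectors $\{a\oplus a_0 : a \in R\}$; and closure under majority places $R$ in the bijunctive (2-CNF-definable) class, which I would recognize from the binary projections of $R$ rather than by enumerating triples. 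Assembling these cases — affine by linear algebra, the remaining Schaefer-type classes by at most pairwise tests — shows every relevant membership test runs within quadratic time, which establishes the corollary. The honest gap to fill in a complete proof is precisely this last complexity accounting for the majority case, where one must argue that the bijunctive recognition can be organized so as not to exceed the quadratic budget.
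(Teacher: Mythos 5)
Your opening reduction coincides exactly with the paper's: fanout is the ternary all-equal relation and swap gives permutation of wires, so by the preceding theorem the question becomes membership of the target relation $R$ in the co-clone $\Inv(\Pol(\mcF))$, equivalently $\Pol(\mcF)\subseteq\Pol(R)$ via the Galois connection of \cite{bodnarchuk1969galois,geiger1968closed}. (Incidentally, you write the co-clone correctly as $\Inv(\Pol(\mcF))$; the paper's own proof has a typo, $\Pol(\Inv(\mcF))$.) At that point the paper simply invokes the quadratic structure-identification algorithm of Creignou, Kolaitis and Zanuttini \cite{creignou2008structure}, whereas you attempt to reconstruct such an algorithm from Post's lattice --- and there your structural premise is false. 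There are not finitely many Boolean clones: Post's lattice is countably infinite, containing the infinite descending chains $S_0^k$, $S_1^k$ (and their meets with the monotone and self-dual clones) for $k\geq 2$. Every clone in the lattice is finitely generated, but the generators of these chains are threshold-type operations whose arity grows with $k$; they are not generated by operations of arity at most three (the arity-$3$ fragments of $S_0^k$ and $S_0^{k+1}$ agree for large $k$, which is precisely why the chain is infinite). Consequently your case analysis --- unary/binary generators handled by pairwise tests, ternary troublemakers being only the affine sum and majority --- omits exactly the co-clones $\Inv(S_0^k)$ and $\Inv(S_1^k)$ (the implicative-hitting-set-bounded classes), where neither Gaussian elimination nor the bijunctive shortcut applies and where a naive invariance test against an arity-$(k+1)$ generator costs $O(m^{k+1})$. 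Covering these families within a quadratic budget is a substantial part of what the plain-basis machinery of \cite{creignou2008structure} actually provides.

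A second, smaller gap is the one you flag yourself: closure under majority is equivalent, by Baker--Pixley, to $R$ being $2$-decomposable, i.e.\ determined by its binary projections; but verifying that $R$ equals the join of its $\binom{n}{2}$ binary projections can exceed quadratic time if done naively, since the join may be far larger than $R$, so this too needs the careful organization of the cited work rather than the sketch given. In short: your reduction to co-clone membership is correct and is the same first step as the paper's proof, but the self-contained quadratic algorithm you propose does not exist as described; either repair it to handle the infinite IHS-B families (at which point you will essentially have reproved \cite{creignou2008structure}) or cite that result directly, as the paper does.
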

\begin{proof}
The set of Boolean predicates implementable by wiring predicates in $\mcF$, assuming swap and fanout, is the co-clone generated by  $\mcF$,  $\Pol(\Inv(\mcF))$.  There exists a {\em quadratic} algorithm determining if a given Boolean relation $r$ is in the co-clone generated by a finite set of relations \cite{creignou2008structure}.  
\end{proof}

The qubit case is quite special.  The equivalent implementability problem (for qudits) with $k>2$ states is co-$\mathsf{NEXPTIME}$-complete \cite{willard2010testing}.  This further illustrates how transitions among satisfiable, polynomial-time, NP- and \#P-complete, co-$\mathsf{NEXPTIME}$-complete, and undecidable problems can be caused by seemingly small changes in the allowed constructions.   

\section{Conclusion} 

The physical significance of decidability has been explored in recent work, and its relevance to quantum information science is a topic of current discussion.  We added to this discussion by focusing on decidability questions as they relate to tensor network states.  We have tried to focus here on the subtle differences between an undecidable problem, and the slight variation of this problem to transform it into a decidable one.  

JM was supported in part by the Defense Advanced Research Projects Agency under Award No. N66001-10-1-4040.  JDB completed parts of this work while visiting the Perimeter Institute for Theoretical Physics.  

\addcontentsline{toc}{section}{References}

\bibliography{bibfile}
\bibliographystyle{unsrt}

\end{document}